\newcommand{\dd}{{\mathsf{d}}}
\newcommand{\R}{\ensuremath{{\mathbb R}}}
\newcommand{\Z}{\ensuremath{{\mathbb Z}}}
\newcommand{\N}{\ensuremath{{\mathbb N}}}
\newcommand{\UU}{{\mathcal U}}
\newcommand{\XX}{{\mathcal X}}
\colorlet{Darkred}{red!50!black}
\colorlet{Darkgreen}{green!50!black}
\colorlet{Darkblue}{blue!70!black}
\colorlet{Darkorange}{orange!80!black}
\colorlet{Darkpurple}{purple!75!gray}
\newtheorem{lem}{{Lemma}}
\newtheorem{thm}{{Theorem}}
\newtheorem{asm}{{Assumption}}
\newtheorem{exam}{{Example}}
\title{\LARGE \bf Model Reference Gaussian Process Regression:\\ Data-Driven State Feedback Controller}
\author{Hyuntae Kim, Hamin Chang, and Hyungbo Shim
\thanks{This work was supported by the grant from Hyundai Motor Company's R\&D Division.}
\thanks{All authors are with ASRI, Department of Electrical and Computer Engineering,
Seoul National University, 1 Gwanak-ro, Gwanak-gu, Seoul, 08826, Korea. Corresponding author: {\tt\small hshim@snu.ac.kr}}%
}
\begin{document}

\maketitle
\thispagestyle{plain}
\pagestyle{plain}

\begin{abstract}

This paper proposes a data-driven state feedback controller that enables reference tracking for nonlinear discrete-time systems. The controller is designed based on the identified inverse model of the system and a given reference model, assuming that the identification of the inverse model is carried out using only the system's state/input measurements. When its results are provided, we present conditions that guarantee a certain level of reference tracking performance, regardless of the identification method employed for the inverse model. Specifically, when Gaussian process regression (GPR) is used as the identification method, we propose sufficient conditions for the required data by applying some lemmas related to identification errors to the aforementioned conditions, ensuring that the Model reference-GPR (MR-GPR) controller can guarantee a certain level of reference tracking performance. Finally, an example is provided to demonstrate the effectiveness of the MR-GPR controller.

\end{abstract}

\section{Introduction}

Gaussian process regression (GPR) \cite{GPML} is a nonparametric regression technique commonly used in machine learning and robotics, owing to its ability to handle large and complex datasets, integrate prior knowledge, and offer a probabilistic understanding of uncertainty, making it a powerful and versatile tool for regression analysis adopted in both academia and industry \cite{KocijanBook}.

The use of GPR for identifying unknown nonlinear systems using input/output or state data and then designing a model-based controller for the identified model has been extensively studied \cite{RDCA03, JRCA04,GEF16,CATJ16,LJM19,JLAM19,Scaramuzza}. Furthermore, the application of GPR for inverse model identification has emerged as a promising approach to controlling complex dynamic systems, such as robotics and mechatronic systems. The concept behind using GPR for inverse model identification is to utilize the inverse model that leads from the current state to the desired next state to learn the control input through GPR.
For instance, \cite{KocijanBook} and \cite{C05} present a detailed review of controlling robotic systems using inverse model GPR.
The practical efficacy of using GPR for inverse model identification in achieving high precision control has been demonstrated in several studies \cite{NSP08, NSP09, NPSS08} conducted on robot arms.
Additionally, \cite{N01} uses neural networks for inverse model-based control.

The primary emphasis of earlier studies has revolved around attaining high precision in experimental results and computational speed in real-time. Nevertheless, despite the practical effectiveness of these approaches, there is still a lack of control theoretical understanding of their fundamental principles. This theoretical gap may result in the absence of a stability guarantee within the system, thereby creating the possibility of hazardous scenarios in real-world systems. As such, it is imperative to address this issue to enhance the reliability and safety of control systems.

In this paper, we propose the data-driven inverse model-based controller as a state feedback control by using only state/input measurements of the plant and identifying the inverse model using GPR under the given reference model. We refer to this as model reference Gaussian process regression (MR-GPR) control as in \cite{HHH22}, where tracking control was not studied while the output feedback version of MR-GPR controller is firstly proposed. 

The structure of this paper is as follows. 
In Section \ref{sec:statefb}, the stability of the closed-loop system is concerned by employing bounds on the identification performance necessary for tracking control using arbitrary data-driven methods. 
Section \ref{sec:main} introduces GPR and several lemmas regarding the posterior variance of the GPR, which can be utilized to measure the regression error as confidence information about the regression result.  
Finally, based on the proposed lemmas, a stability analysis of the closed-loop system with MR-GPR is given. 
In Section \ref{sec:ex}, we offer an example to demonstrate the effectiveness of the MR-GPR controller. 
 Lastly, Section \ref{sec:conc} summarizes the paper.
 
{\it Notation:}
For integers $n, m,$ and $k$, let $0_{n\times m}\in\mathbb{R}^{n\times m}$ and $I_k\in\mathbb{R}^{k\times k}$ be the zero matrix and the identity matrix, respectively.
For column vectors $a$ and $b$,  $[a;b]$ denotes $[a^T,b^T]^T$. 
For a set $\mathcal{A}$, we define the number of elements in the set $\mathcal{A}$ as $|\mathcal{A}|.$
For discrete-time vector sequences $y(t)$ and $z(t)$,
we define a set
\begin{align*}
    &\{(y(t),z(t))\}^{k+T}_{t=k}\\
    &\quad\quad:= \{(y(k),z(k)),\cdots,(y(k+T),z(k+T))\}.
\end{align*}

\section{System Description and Data-Driven Inverse Model-Based Control}\label{sec:statefb}

Consider a nonlinear discrete-time system
\begin{align}\label{SYSTEM:full}
x(t+1) &= f(x(t),u(t)),
\end{align}
where $t \in \Z$ is the discrete time index, $x(t)\in\R^{n}$ is the state, $u(t) \in \UU \subset \R^m$ is the input with a compact set $\UU$, and $f:\R^n \times \R^m \to \R^n$ is smooth.
Suppose that a reference model is given by
\begin{equation}\label{eq:real_closedfull}
x(t+1) = f_R(x(t),t),
\end{equation}
where $f_R: \R^n \times \Z \to \R^n$ is smooth in its first argument and uniformly bounded in the second argument.
Our goal is to construct a (data-driven) state feedback controller with which the following holds for a given error bound $\epsilon > 0$:
\begin{equation}\label{eq:closeness}
    \|x(t) - x_R(t)\| < \epsilon, \qquad \forall t \ge 0,
\end{equation}
where $x_R$ is the solution to \eqref{eq:real_closedfull} with $x_R(0)=x(0)$.
Note that some external inputs to the reference model are allowed by the time index in the function $f_R$ (see Example \ref{exam:1}).
The reference model has our desired stability, performance, robustness and so on, and we suppose that there is a compact and convex set $\XX_R \subset \R^n$, which is an operation region of \eqref{eq:real_closedfull}, such that $x_R(t) \in \XX_R$ for all $t \ge 0$.
In addition, we particularly assume that the reference model is contractive in the sense of \cite{Lohmiller98} as follows.

\begin{asm}\label{asm:contractive}
There exist a positive definite matrix $\Theta \in \R^{n \times n}$ and a scalar $\gamma<1$ such that
\begin{equation}
    \left( \frac{\partial f_R}{\partial x}(x,t) \right)^T  \Theta \left( \frac{\partial f_R}{\partial x}(x,t) \right) \le \gamma \Theta, \qquad \forall x \in \XX, t \ge 0
\end{equation}
where $\XX := \{ x \in \R^n : \exists z \in \XX_R \text{ such that } \|x-z\| \le \epsilon \}$.
\end{asm}

If the reference model is a linear system, then Assumption~\ref{asm:contractive} is nothing but the stability of the linear system.
Now, to enable model reference control, we impose the following assumption.

\begin{asm}\label{ASM:new} 
System \eqref{SYSTEM:full} satisfies the following.
\begin{itemize}
    \item[(a)] For $x(t),x(t+1)\in\R^n$ obtained from system \eqref{SYSTEM:full}, an input $u(t)\in\R^m$ that satisfies $x(t+1)=f(x(t),u(t))$ is unique.
    In other words, there exists a function
    $c: \R^{2n} \to \R^m$ such that 
    \begin{equation}
    x(t+1)=f(x(t),c([x(t);x(t+1)]))
    \end{equation}
    for all $x(t),x(t+1)$, and $t \ge 0$.
    \item[(b)] For each $x \in \XX$ and $t \ge 0$, there exists $u \in \R^m$ such that 
\begin{equation}\label{eq:asm1}
f_R(x,t) = f(x,u).
\end{equation}
\end{itemize}
\end{asm}

If system \eqref{SYSTEM:full} is input-affine, i.e., given by
$$x(t+1) = f(x(t),u(t)) = a(x(t)) + b(x(t)) u(t),$$ where $a:\R^n \to \R^n$ and $b:\R^n \to \R^{n \times m}$ are smooth, and if $b(x)$ has full column rank for all $x \in \R^n$, then Assumption \ref{ASM:new}-(a) holds with 
the smooth function $c$ given by
\begin{equation}\label{eq:c}
	c([x(t);x(t+1)]) = b^\dagger(x(t)) (x(t+1) - a(x(t))),
\end{equation}
where $b^\dagger(x(t))$ is a left-inverse of $b(x(t))$.
In addition, if the reference model $f_R$ is chosen to satisfy
\begin{equation}\label{eq:image}
	f_R(x,t) - a(x) \in {\rm Im}( b(x) ), \quad \forall x \in \XX, t \ge 0,
\end{equation}
where ${\rm Im}(b(x))$ represents the image of the matrix $b(x)$, then Assumption \ref{ASM:new}-(b) also holds.

It is trivial that under Assumption \ref{ASM:new}, the closed-loop system \eqref{SYSTEM:full} with a state feedback control input
\begin{align}\label{idealstatefull}
u(t) = c([x(t);f_R(x(t),t)])
\end{align}
becomes \eqref{eq:real_closedfull} for all $t\geq0$. However, in order to find a data-driven construction of feedback controller \eqref{idealstatefull}, we need to identify the function $c$ and design a reference model $f_R$ without complete knowledge of $f$.
The following example shows that even if $f$ is not completely known, one can design a reference model $f_R$ such that 
Assumption \ref{ASM:new}-(b) holds.
 
\begin{exam}\label{exam:1}
Consider a system given in the Brunovsky canonical form \cite{Isidori-book}:
\begin{align*}
x^+ &= \begin{bmatrix} x_1^+ \\ x_2^+ \\ \vdots \\ x_{n-1}^+ \\ x_n^+ \end{bmatrix} 
= \begin{bmatrix} x_2 \\ x_3 \\ \vdots \\ x_n \\ a_n(x) \end{bmatrix} + \begin{bmatrix} 0 \\ 0 \\ \vdots \\ 0 \\ b_n(x) \end{bmatrix} u,
\end{align*}
where $x^+$, $x$, and $u$ imply $x(t+1)$, $x(t)$, and $u(t)$ respectively, $a_n:\R^n \to \R$ and $b_n: \R^n \to \R$ are unknown but smooth, and $u \in \R$.
Suppose that $b_n(x) \not = 0$ for all $x \in \R^n$.
Then by \eqref{eq:c}, Assumption \ref{ASM:new}-(a) holds with the (unknown) smooth function
\begin{equation}\label{eq:affine_c}
c([x;x^+]) = \frac{x_n^+ - a_n(x)}{b_n(x)}.
\end{equation}
Furthermore, \eqref{eq:image} (and thus, Assumption 2-(b)) is satisfied as long as the reference model has the form:
\begin{equation}\label{eq:f_R}
f_R(x,t) = \begin{bmatrix} x_2 \\ \vdots \\ x_n \\ f_n(x,t) \end{bmatrix},
\end{equation}
where $f_n(x,t)$ is an arbitrary smooth function. Although $a_n$ and $b_n$ are unknown, the smooth function $f_n(x,t)$ can be designed depending on the user's control goal.
For example, suppose that we want the state $x_1(t)$ to asymptotically track a given signal $\phi(t)$ by applying the control input $u=c([x;f_R(x,t)])$ of \eqref{idealstatefull}. Then, one can simply take 
\begin{equation}\label{eq:phi}
	f_n(x,t) = -\sum_{i=1}^n k_i (x_i(t) - \phi(t+i-1)) + \phi(t+n)
\end{equation}
in which, the state feedback gains $k_i$ are chosen such that the matrix
$$A = \begin{bmatrix} 0 & 1 & 0 & \cdots & 0 \\ 0 & 0 & 1 & \cdots & 0 \\ \vdots & \vdots & \vdots & \ddots & \vdots \\ -k_1 & -k_2 & -k_3 & \cdots & -k_n \end{bmatrix}$$
is Schur stable whose eigenvalues are located to meet the convergence performance.
(One can verify this using $\tilde x_i(t) := x_i(t) - \phi(t+i-1)$, $i=1,\ldots,n$, because they yield a stable linear system $\tilde x^+ = A \tilde x$.)
$\hfill\Box$
\end{exam}

Given a reference model that satisfies Assumptions \ref{asm:contractive} and \ref{ASM:new}-(b), let $\hat{c}:\R^{2n}\to\R^m$ be a data-driven identification of the function $c$. This means that the function $\hat{c}$ is obtained by using only available state/input measurements of system \eqref{SYSTEM:full} with some identification method on $c$. Before specifying the identification method, let us inspect a sufficient condition on the function $\hat{c}$ such that the closed-loop system \eqref{SYSTEM:full} with a controller
$u(t) = \hat{c}([x(t);f_R(x(t),t)])$
satisfies \eqref{eq:closeness}.
Let
$$B := \max_{x \in \XX, t \ge 0} \|(\partial f_R)/(\partial x)(x,t)\|$$
which is well-defined because $\XX$ is compact and $f_R$ is smooth in the first argument and uniformly bounded in the second argument.
Also, let $\lambda_{\min}$ and $\lambda_{\max}$ be the minimum and the maximum eigenvalues of the positive definite matrix $\Theta$ in Assumption \ref{asm:contractive}, respectively.
Finally, let $L_f$ be a Lipschitz constant of $f$ such that
$$\|f(x,u_a) - f(x,u_b)\| \le L_f \|u_a-u_b\|, \quad \forall x \in \XX, u_a, u_b \in \UU$$
which is well-defined due to smoothness of $f$ and compactness of $\XX$ and $\UU$.
Defining the set $\mathcal{C} := \mathcal{X}\times\mathcal{X}$, we present the following theorem.

\begin{thm}\label{THM1}
Under Assumptions \ref{asm:contractive} and \ref{ASM:new}, the closed-loop system \eqref{SYSTEM:full} with a controller $u(t) = \hat{c}([x(t);f_R(x(t),t)])$ satisfies
$$\|x(t) - x_R(t)\| < \epsilon, \qquad \forall t \ge 0$$
where $x(0) \in \XX_R$ and $x_R$ is the solution to the reference model \eqref{eq:real_closedfull} with $x_R(0)=x(0)$, if 
\begin{equation}\label{eq:regression_error}
	\| \hat{c}(\xi) - c(\xi) \| \le M, \qquad \forall \xi \in \mathcal{C}
\end{equation}
with $M>0$ such that
\begin{equation}\label{eq:M}
	\frac{2 L_f M \lambda_{\max} B \epsilon+ L_f^2 M^2 \lambda_{\max}}{\lambda_{\min}(1-\gamma)} < \epsilon^2 .
\end{equation}
\end{thm}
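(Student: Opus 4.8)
The natural strategy is a discrete-time incremental (contraction) argument on the error $e(t) := x(t) - x_R(t)$, using the Lyapunov-like function $V(t) := e(t)^T \Theta e(t)$. I would first show by induction that $x(t) \in \XX$ for all $t$; since $x_R(t) \in \XX_R$, this amounts to maintaining $\|e(t)\| \le \epsilon$, which is exactly the claim, so the induction hypothesis and the conclusion will be proved together. The base case is immediate because $e(0) = 0$.

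For the inductive step, assume $\|e(t)\| \le \epsilon$, hence $x(t) \in \XX$ and $V(t) \le \lambda_{\max}\epsilon^2$. The key algebraic identity is
\begin{equation*}
x(t+1) = f(x(t), \hat c([x(t);f_R(x(t),t)])), \qquad x_R(t+1) = f_R(x(t),t) = f(x(t), c([x(t);f_R(x(t),t)])),
\end{equation*}
where the second equality uses Assumption~\ref{ASM:new}. Writing $\xi := [x(t);f_R(x(t),t)] \in \CC$, I would split $e(t+1) = [f_R(x(t),t) - f_R(x_R(t),t)] + [f(x(t),\hat c(\xi)) - f(x(t),c(\xi))]$. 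Call these $p$ and $q$. The contraction Assumption~\ref{asm:contractive} (via the mean value / fundamental theorem of calculus along the segment from $x_R(t)$ to $x(t)$, which lies in $\XX$ by convexity of $\XX_R$... actually $\XX$ is the $\epsilon$-neighborhood, which is convex since $\XX_R$ is) gives $p^T \Theta p \le \gamma\, e(t)^T\Theta e(t) = \gamma V(t)$. The regression bound \eqref{eq:regression_error} together with the Lipschitz constant $L_f$ gives $\|q\| \le L_f M$. Expanding $V(t+1) = (p+q)^T\Theta(p+q) = p^T\Theta p + 2 p^T \Theta q + q^T\Theta q$ and bounding the cross term by $\|p\| \le B\|e(t)\| \le B\epsilon$ (using the definition of $B$) and $\|\Theta\| \le \lambda_{\max}$, I obtain
\begin{equation*}
V(t+1) \le \gamma V(t) + 2\lambda_{\max} B\epsilon\, L_f M + \lambda_{\max} L_f^2 M^2.
\end{equation*}

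To close the induction I need $V(t+1) \le \lambda_{\max}\epsilon^2$ so that $\|e(t+1)\|^2 \le V(t+1)/\lambda_{\min} \le (\lambda_{\max}/\lambda_{\min})\epsilon^2$ — wait, that only gives $\|e(t+1)\| \le \sqrt{\lambda_{\max}/\lambda_{\min}}\,\epsilon$, which is not quite $\epsilon$. The correct invariant to propagate is $V(t) \le \lambda_{\min}\epsilon^2$ (which does imply $\|e(t)\| \le \epsilon$), and this is consistent at $t=0$ since $V(0)=0$. So the inductive hypothesis should be $V(t) \le \lambda_{\min}\epsilon^2$, giving in the cross-term bound $\|e(t)\|^2 \le \epsilon^2$ hence $\|p\|\le B\epsilon$ still, and then $V(t+1) \le \gamma\lambda_{\min}\epsilon^2 + 2\lambda_{\max}B\epsilon L_f M + \lambda_{\max}L_f^2 M^2$. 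This is $\le \lambda_{\min}\epsilon^2$ precisely when $2\lambda_{\max}B\epsilon L_f M + \lambda_{\max}L_f^2 M^2 \le (1-\gamma)\lambda_{\min}\epsilon^2$, i.e. condition \eqref{eq:M} (with the strict inequality giving the strict bound $\|e(t+1)\| < \epsilon$, after noting the base case and that the recursion strictly decreases the gap). This completes the induction.

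The main obstacle I anticipate is purely bookkeeping rather than conceptual: making the mean-value argument for the bound $p^T\Theta p \le \gamma V(t)$ fully rigorous requires that the entire segment $\{x_R(t) + s\,e(t) : s\in[0,1]\}$ lie in the region $\XX$ where Assumption~\ref{asm:contractive} holds — this is where convexity of $\XX_R$ (hence of $\XX$) and the inductive bound $\|e(t)\|\le\epsilon$ are both essential, and one must be careful that the integral $\int_0^1 \frac{\partial f_R}{\partial x}(x_R(t)+s e(t),t)\,ds\cdot e(t)$ inherits the quadratic bound $\gamma\Theta$ (this follows from a standard Jensen/Cauchy–Schwarz step on the $\Theta$-weighted norm). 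The rest is the elementary quadratic estimate and choosing the right invariant, as corrected above.
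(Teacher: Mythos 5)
Your proposal is correct and takes essentially the same approach as the paper: the same decomposition $e(t+1)=\bigl[f_R(x,t)-f_R(x_R,t)\bigr]+\bigl[f(x,\hat c)-f(x,c)\bigr]$, the same weighted Lyapunov function $V=e^T\Theta e$ with the mean-value/contraction bound, and the same one-step recursion $V^+\le \gamma V+2L_fM\lambda_{\max}B\epsilon+L_f^2M^2\lambda_{\max}$. The only (immaterial) difference is how the induction is closed: you propagate the invariant $V(t)<\lambda_{\min}\epsilon^2$ directly, while the paper unrolls the recursion into a geometric series to get $V(k)\le \beta/(1-\gamma)$ and then invokes \eqref{eq:M}; the two are equivalent uses of the same condition.
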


\begin{proof}
	We first note that, for each $x, x_R \in \XX$, there exists $ w \in \XX$ such that
	\begin{align*}
		&(f_R(x,t) - f_R(x_R,t))^T \Theta (f_R(x,t) - f_R(x_R,t)) \\
		&\le (x-x_R)^T \left( \frac{\partial f_R}{\partial x}(w,t) \right)^T \Theta \left( \frac{\partial f_R}{\partial x}(w,t) \right) (x-x_R)
	\end{align*}	
	which can be proved by the mean-value theorem with some trick\footnote{One can refer to \cite[Appendix A]{JWkim} for details.}.
	Now, let 
	$$V(x,x_R) = (x-x_R)^T \Theta (x-x_R).$$
	Then, since 
	$$x^+ = f(x,\hat{c}) = f_R(x,t) + [f(x,\hat{c})-f(x,c)],$$
	it follows, as long as $x \in \XX$ and $x_R \in \XX_R$, that
\begin{align}\label{thm:V}
	V^+ &= \left(f_R(x,t) - f_R(x_R,t) + [f(x,\hat{c})-f(x,c)] \right)^T \Theta \nonumber\\
	&\quad \times \left(f_R(x,t) - f_R(x_R,t) + [f(x,\hat{c})-f(x,c)] \right) \nonumber\\
	&\le (x-x_R)^T \left( \frac{\partial f_R}{\partial x}(w,t) \right)^T \Theta \left( \frac{\partial f_R}{\partial x}(w,t) \right) (x-x_R) \nonumber\\
	&\qquad +2L_fM \lambda_{\max} B \|x-x_R\| + L_f^2M^2 \lambda_{\max}\nonumber\\
	&\le \gamma V + 2L_fM \lambda_{\max} B \|x-x_R\| + L_f^2M^2 \lambda_{\max}.
\end{align}

We conclude the proof with mathematical induction.
Since $x(0)=x_R(0)\in\XX_R$ when $t=0$, the above inequality implies that
\begin{align*}
\|x(1)-x_R(1)\|^2 \le V(1)/\lambda_{\min} \le L_f^2M^2 \lambda_{\max}/\lambda_{\min} < \epsilon^2
\end{align*}
in which, the last inequality follows from \eqref{eq:M}.
Since $x_R(1) \in \XX_R$, it also follows that $x(1) \in \XX$ by the definition of $\XX$.
Now, suppose that $\|x(t)-x_R(t)\| < \epsilon$ for $t=0,1,\cdots,k-1$ (so that $x(t) \in \XX$ for $t=0,1,\dots,k-1$).
With 
$$\beta := 2 L_f M \lambda_{\max} B \epsilon + L_f^2 M^2 \lambda_{\max}$$
for convenience, we can repeatedly apply \eqref{thm:V} to obtain
\begin{align*}
V(k) &\le \gamma V(k-1) + \beta \le \gamma(\gamma V(k-2) + \beta) + \beta \\
&\le \quad \cdots \quad \le \gamma^{k} V(0) + \beta \frac{1-\gamma^{k}}{1-\gamma} \le \frac{\beta}{1-\gamma}.
\end{align*}
Then, we achieve with \eqref{eq:M} that
\begin{align*}
	\| x(k) - x_R(k) \|^2 &\le \frac{V(k)}{\lambda_{\min}}\le \frac{\beta}{\lambda_{\min}(1-\gamma)} < \epsilon^2.
\end{align*}
As a result, $\|x(k) - x_R(k)\| < \epsilon$ and $x(k) \in \XX$, which completes the proof.
\end{proof}

Theorem \ref{THM1} simply shows that if we have sufficiently accurate identification of the function $c$, it can directly be used for designing data-driven version of controller \eqref{idealstatefull} to guarantee sufficient performance of model reference control \eqref{eq:closeness}. In the next section, we utilize Gaussian process regression (GPR) as an identification method to obtain $\hat{c}$ and present how to collect state/input data to achieve sufficiently accurate identification result.

\section{Model Reference-Gaussian Process Regression}\label{sec:main}

In this section, we propose a controller generated by GPR using state/input data from system \eqref{SYSTEM:full}.
The resulting data-driven controller can generate control inputs that emulate those produced by \eqref{idealstatefull}.

We use GPR to identify the function $c$ in \eqref{idealstatefull} itself and to do so, we additionally assume that the function $c$ is infinitely differentiable in the set $\mathcal{C}.$ This approach involves treating $[x(t);x(t+1)]$ as input data and $u(t)$ as output data for the function $c$, based on the relation $u(t) = c([x(t);x(t+1)])$ from Assumption \ref{ASM:new}-(a).
For GPR, we split the function $c(\cdot)$ as $c(\cdot) =: [c_1(\cdot);c_2(\cdot);\cdots;c_m(\cdot)]$, where $c_i:\mathbb{R}^{2n} \to \mathbb{R}.$

To perform the proposal, we first collect state/input data{\footnote{The subscript $\dd$ is employed to represent the data gathered from the system during an experiment.}} of system \eqref{SYSTEM:full} as
\begin{align}\label{DATASET}
    \{(x_\dd(t),u_\dd(t))\}_{t=1}^{{N}}
\end{align}
where $N$ is the total number of state/input data.
Then we rearrange the data as the training input
\begin{align*}
    \xi_\dd(t) := [ {x_\dd}(t);{x_\dd}(t+1)] \in \mathbb{R}^{2n}
\end{align*}
and the training output
$$u_\dd {(t)}\in \mathbb{R}^{m},$$
yielding the training dataset:
\begin{align}\label{TD}
    {\mathcal{D}_N}:=\left\{ (\xi_\dd(t), u_\dd (t) )
    \right\}_{t=1}^{N-1}.
\end{align}
For the training dataset ${\mathcal{D}_N},$ we define the set of training input as 
\begin{align}\label{traininginputset}
\mathcal{I}_{{\mathcal{D}_N}}: = \left\{ \xi_\dd(t)\right\}_{t=1}^{N-1} \subset \mathbb{R}^{2n}.
\end{align}

Collecting sufficiently long state/input data as \eqref{TD} for an unstable system may present a challenge. In such cases, combining state/input data from each short experiment with varying initial conditions may be more practical. For further details, refer to \cite[Remark 1]{HHH22}.

A Gaussian process (GP) is uniquely characterized by a mean function $m_i: {\mathcal{C}} \to \mathbb{R}$ and a covariance function $k_i: {\mathcal{C}} \times {\mathcal{C}} \to \mathbb{R}$ for $i = 1,\cdots,m$.
To identify each function $c_i$ for $i = 1,\cdots,m$, we utilize the GP with the zero function for the mean function and a squared exponential (SE) kernel for the covariance function:
\begin{align}\label{SEkernel}
    k_i({\xi},{\xi}') = \sigma_{i}^2 \text{exp}\left( -\frac12 (\xi-\xi')^T H_i^{-1} (\xi-\xi') \right),
\end{align}
where $\sigma_{i}$ and $H_i = {\rm diag}(h_{i,1}^2, \ldots, h_{i,2n}^2)$ are hyperparameters for $i=1,\cdots,m$.
The hyperparameters are determined through marginal likelihood optimization, a technique based on Bayesian principles \cite[Chapter~5]{GPML}.

Given the training dataset $\mathcal{D}_N$ defined in \eqref{TD}, the GP produces posterior mean and variance functions for a test input $\xi \in \mathcal{C}$ as
\begin{align}
    \mu^i_{\mathcal{D}_N}(\xi) &:= \mathbf{k}_i^T ({\xi}){\mathbf{K}}_i^{-1} \mathbf{u}^i, \label{mu} \\
    \sigma^i_{\mathcal{D}_N}(\xi) &:= k_i ({\xi},{\xi}) - \mathbf{k}_i^T ({\xi}) {\mathbf{K}}_i^{-1} \mathbf{k}_i ({\xi}), \label{sig}
\end{align}
respectively, where 
\begin{align*}
\mathbf{u}^i &:= [{u}^i_\dd{(1)};\cdots;{u}^i_\dd{({N}-1)}],\\
\mathbf{k}_i ({\xi})&:= [k_i(\xi_\dd{(1)},{\xi});\cdots;k_i(\xi_\dd{({N}-1)},{\xi})],\\
\mathbf{K}_i &:= 
\resizebox{.42\textwidth}{!}{$
\begin{bmatrix}
k_i(\xi_\dd{(1)},\xi_\dd{(1)}) & \cdots & k_i(\xi_\dd{(1)},\xi_\dd{({N}-1)})\\
\vdots & \ddots & \vdots \\
k_i(\xi_\dd{({N}-1)},\xi_\dd{(1)}) & \cdots & k_i(\xi_\dd{({N}-1)},\xi_\dd{({N}-1)})
\end{bmatrix}
$}
\end{align*}
for $i=1,\cdots,m,$ where ${u}_\dd{(t)} =: [{u}^1_\dd{(t)};{u}^2_\dd{(t)};\cdots;{u}^m_\dd{(t)}].$
It is noted that each posterior mean function $\mu^i_{\mathcal{D}_N}$ is in fact the estimation of the function $c_i$ using only the state/input data of system \eqref{SYSTEM:full}. The confidence of this estimation is represented by the posterior variance function $\sigma^i_{\mathcal{D}_N}$.

Finally, with the reference model $f_R$, we construct the model reference GPR (MR-GPR) controller by combining posterior mean functions $\mu^i_{\mathcal{D}_N}$ to $\mu_{\mathcal{D}_N}$ as 
\begin{align}\label{controllerfull}
\begin{split}
u(t) &= \mu_{\mathcal{D}_N} ([x(t); f_R(x(t),t)])\\
     &= \begin{bmatrix}
        \mu^1_{\mathcal{D}_N} ([x(t); f_R(x(t),t)])\\
        \vdots\\
        \mu^m_{\mathcal{D}_N} ([x(t); f_R(x(t),t)])
    \end{bmatrix}
\end{split} 
\end{align}
which is a state feedback controller.
We also combine the posterior variance functions $\sigma^i_{\mathcal{D}_N}$ to $\sigma_{\mathcal{D}_N}$ as 
\begin{align}\label{variancefull}
\begin{split}
\sigma_{\mathcal{D}_N} ([x(t); f_R(x(t),t)]) = \begin{bmatrix}
        \sigma^1_{\mathcal{D}_N} ([x(t); f_R(x(t),t)])\\
        \vdots\\
        \sigma^m_{\mathcal{D}_N} ([x(t); f_R(x(t),t)])
    \end{bmatrix}.
\end{split}
\end{align}
The following lemma describes the difference between the ideal controller $c$ in \eqref{idealstatefull} and the MR-GPR controller $\mu_{\mathcal{D}_N}$ in \eqref{controllerfull} using the posterior variance function $\sigma_{\mathcal{D}_N}$ in \eqref{variancefull}.

\begin{lem}\label{LEMMA1}
\cite[Corollary~3.11]{Kanagawa18} Under Assumption \ref{ASM:new}, for the training dataset ${\mathcal{D}_N}$, it holds that
\begin{align*}
     ||\mu_{\mathcal{D}_N} ({\xi}) - c({\xi}) || \leq ||c||_k \cdot ||\sigma_{\mathcal{D}_N} ({\xi})||,~~~\forall {\xi} \in {\mathcal{C}},
\end{align*}
where $||c||_k: = \max (||c_1||_{k_1},\cdots,||c_m||_{k_m})$ and $||c_i||_{k_i}$ is the RKHS norm \cite{Kanagawa18} of the function $c_i$ under the kernel $k_i$.$\hfill\Box$
\end{lem}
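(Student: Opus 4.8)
The plan is to reduce the vector-valued claim to $m$ scalar bounds, one per output coordinate of $c$, invoke the standard noise-free RKHS interpolation estimate componentwise, and then recombine by stacking.

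First I would record why the scalar estimate applies at all. By Assumption~\ref{ASM:new}-(a), for every $t$ the input $u_\dd(t)$ is the \emph{unique} value with $x_\dd(t+1) = f(x_\dd(t), u_\dd(t))$, so $u_\dd(t) = c(\xi_\dd(t))$, hence $u^i_\dd(t) = c_i(\xi_\dd(t))$ for each $i = 1,\dots,m$. Thus, for each fixed $i$, the training data $(\mathcal{I}_{\mathcal{D}_N}, \mathbf{u}^i)$ are \emph{exact} (noise-free) evaluations of the scalar function $c_i$, and $\mu^i_{\mathcal{D}_N}$, $\sigma^i_{\mathcal{D}_N}$ in \eqref{mu}--\eqref{sig} are precisely the kernel interpolant and the associated posterior variance for $(k_i, \mathcal{I}_{\mathcal{D}_N})$. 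Since $c_i$ lies in the RKHS of $k_i$ (so that $\|c_i\|_{k_i} < \infty$), \cite[Corollary~3.11]{Kanagawa18} yields, for all $\xi \in \mathcal{C}$,
\begin{equation*}
  |\mu^i_{\mathcal{D}_N}(\xi) - c_i(\xi)| \le \|c_i\|_{k_i}\, \sigma^i_{\mathcal{D}_N}(\xi), \qquad i = 1,\dots,m.
\end{equation*}
Next I would stack the coordinates: squaring and summing over $i$,
\begin{equation*}
  \|\mu_{\mathcal{D}_N}(\xi) - c(\xi)\|^2 = \sum_{i=1}^m |\mu^i_{\mathcal{D}_N}(\xi) - c_i(\xi)|^2 \le \sum_{i=1}^m \|c_i\|_{k_i}^2\, \sigma^i_{\mathcal{D}_N}(\xi)^2 \le \Big( \max_{1\le i\le m} \|c_i\|_{k_i}^2 \Big) \sum_{i=1}^m \sigma^i_{\mathcal{D}_N}(\xi)^2,
\end{equation*}
and since the last sum is $\|\sigma_{\mathcal{D}_N}(\xi)\|^2$ and $\max_i \|c_i\|_{k_i}^2 = \|c\|_k^2$, taking square roots gives the stated inequality.

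I expect the only substantive point — more a matter of bookkeeping than a genuine obstacle — to be verifying that the hypotheses of \cite[Corollary~3.11]{Kanagawa18} truly hold here: namely that the GP posterior in \eqref{mu}--\eqref{sig} is that of the \emph{noise-free} regression problem (this is exactly where Assumption~\ref{ASM:new}-(a) enters, since it forbids two distinct outputs being associated with the same training input $\xi_\dd$), and that each $c_i$ belongs to $\mathcal{H}_{k_i}$. The membership $c_i \in \mathcal{H}_{k_i}$ is restrictive for the squared-exponential kernel and is taken as a standing hypothesis, consistent with the earlier assumption that $c$ is infinitely differentiable on $\mathcal{C}$; I would simply flag it rather than attempt to establish it. The remaining steps are the routine vectorization displayed above.
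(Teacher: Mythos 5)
Your proof is correct and matches the paper's approach: the paper states this lemma as a direct consequence of \cite[Corollary~3.11]{Kanagawa18} with no further argument, and your componentwise application of that scalar bound followed by the max--sum stacking estimate is exactly the routine vectorization the paper leaves implicit. Your two flagged points --- that Assumption~\ref{ASM:new}-(a) is what makes the training outputs exact evaluations of $c$ (so the noise-free interpolation setting of the cited corollary applies), and that $c_i \in \mathcal{H}_{k_i}$ must be taken as a standing hypothesis since smoothness alone does not give membership in the SE-kernel RKHS --- are both apt and consistent with how the paper uses the result.
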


Note that the upper bound of the identification error can be reduced arbitrarily by sufficiently reducing the posterior variance $\sigma_{\mathcal{D}_N} ({\xi})$.

Before going into the next lemma,
we define the set of the training inputs in the closed ball centered at $\xi$ with radius $\rho > 0$ as $\mathbb{B}_\rho (\xi) = \{\xi' \in \mathcal{I}_{\mathcal{D}_N} : ||\xi'-\xi|| \leq \rho \}$.
Also, we define a Lipschitz constant of the kernel $k_i$ as $L_{k_i}$.

\begin{lem}\label{LEMMA2}
\cite[Corollary~3.2]{Lederer19} Under Assumption \ref{ASM:new}, for the training dataset ${\mathcal{D}_N}$ as a function of $N$, if there exists a function $\rho({N})$ such that 
\begin{align*}
    0 < \rho ({N}) &\leq \min(\frac{k_1(\xi,\xi)}{L_{k_1}},\cdots,\frac{k_m(\xi,\xi)}{L_{k_m}}), ~~\forall {N} \in \mathbb{N}\\
    \lim_{{N} \to \infty} \rho ({N}) &= 0\\
    \lim_{{N} \to \infty} |\mathbb{B}_{\rho({N})} (\xi)| & = \infty
\end{align*}
for all $\xi \in \mathcal{C}$, then $||\sigma_{\mathcal{D}_N} ({\xi})|| \rightarrow 0$ as ${N} \rightarrow \infty$ for every $\xi \in \mathcal{C}$.
$\hfill\Box$
\end{lem}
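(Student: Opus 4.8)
\section*{Proof proposal for Lemma~\ref{LEMMA2}}

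The plan is to reduce the claim to a purely geometric statement about the training inputs and then to exploit two structural features of the Gaussian‑process posterior variance: it is monotone non‑increasing under enlargement of the training set, and it admits a variational characterization as a squared RKHS distance, which removes any need to manipulate $\mathbf{K}_i^{-1}$ directly. First note that $\sigma_{\mathcal{D}_N}$ in \eqref{sig}--\eqref{variancefull} depends only on the training inputs $\mathcal{I}_{\mathcal{D}_N}$ and the kernels $k_i$, not on the outputs, and that $\|\sigma_{\mathcal{D}_N}(\xi)\|^2=\sum_{i=1}^m (\sigma^i_{\mathcal{D}_N}(\xi))^2$ with every $\sigma^i_{\mathcal{D}_N}(\xi)\ge 0$; hence it suffices to fix an index $i$ and a point $\xi\in\mathcal{C}$ and to prove $\sigma^i_{\mathcal{D}_N}(\xi)\to 0$.

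Next I would use the identity $\mathbf{k}_i(\xi)^T\mathbf{K}_i^{-1}\mathbf{k}_i(\xi)=\max_{\alpha}\bigl(2\alpha^T\mathbf{k}_i(\xi)-\alpha^T\mathbf{K}_i\alpha\bigr)$, valid when $\mathbf{K}_i$ is positive definite, so that for \emph{any} weight vector $\alpha$ supported on the training inputs lying in $\mathbb{B}_{\rho(N)}(\xi)$,
\[
\sigma^i_{\mathcal{D}_N}(\xi)\ \le\ k_i(\xi,\xi)-2\alpha^T\mathbf{k}_i(\xi)+\alpha^T\mathbf{K}_i\alpha .
\]
This is legitimate because restricting the support of $\alpha$ only makes the maximum smaller, i.e.\ the variance monotone in the available data; the hypothesis $|\mathbb{B}_{\rho(N)}(\xi)|\to\infty$ guarantees that this sub‑dataset is nonempty for all large $N$. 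Concentrating $\alpha$ on a single nearest training input $\xi_\star$ with $\|\xi_\star-\xi\|\le\rho(N)$ and optimizing the scalar weight yields $\sigma^i_{\mathcal{D}_N}(\xi)\le k_i(\xi,\xi)-k_i(\xi_\star,\xi)^2/k_i(\xi_\star,\xi_\star)$. Now the Lipschitz bound for $k_i$ together with the standing hypothesis $\rho(N)\le k_i(\xi,\xi)/L_{k_i}$ gives $k_i(\xi_\star,\xi)\ge k_i(\xi,\xi)-L_{k_i}\rho(N)\ge 0$ and $k_i(\xi_\star,\xi_\star)\le k_i(\xi,\xi)+2L_{k_i}\rho(N)$, and an elementary simplification then produces a bound of the form $\sigma^i_{\mathcal{D}_N}(\xi)\le C\,L_{k_i}\rho(N)$ for an absolute constant $C$; since $\rho(N)\to 0$ this completes the argument. (In the measurement‑noise setting of \cite{Lederer19} a single point leaves a nonvanishing ``noise floor'' in the variance, so one instead takes $\alpha$ roughly uniform over all $|\mathbb{B}_{\rho(N)}(\xi)|$ nearby inputs and shows the aggregated information drives that floor to zero; this is where the third hypothesis is genuinely needed.)

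The step I expect to be the main obstacle is controlling the quadratic form $\mathbf{k}_i(\xi)^T\mathbf{K}_i^{-1}\mathbf{k}_i(\xi)$ --- showing it stays close to $k_i(\xi,\xi)$ as the nearby inputs cluster around $\xi$. Handling $\mathbf{K}_i^{-1}$ entrywise is awkward precisely because $\mathbf{K}_i$ becomes ill‑conditioned in exactly the regime of interest; the variational reformulation above circumvents this and leaves only the sign‑ and magnitude‑control of the kernel entries, for which the condition $\rho(N)\le\min_i k_i(\xi,\xi)/L_{k_i}$ is exactly what keeps the relevant entries positive and the estimate meaningful. Once this is in place, the three limit hypotheses on $\rho(N)$ and $|\mathbb{B}_{\rho(N)}(\xi)|$ feed directly into the bound to force $\|\sigma_{\mathcal{D}_N}(\xi)\|\to 0$.
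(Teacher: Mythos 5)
The paper does not actually prove Lemma~\ref{LEMMA2}; it is imported verbatim as \cite[Corollary~3.2]{Lederer19}, so there is no in-paper argument to compare against. Your proposal is a correct, self-contained proof for the noise-free posterior variance \eqref{sig} that the lemma refers to, and it follows the same strategy as the cited source: bound $\mathbf{k}_i(\xi)^T\mathbf{K}_i^{-1}\mathbf{k}_i(\xi)$ from below by evaluating the concave quadratic $\alpha\mapsto 2\alpha^T\mathbf{k}_i(\xi)-\alpha^T\mathbf{K}_i\alpha$ at a suboptimal $\alpha$ supported on $\mathbb{B}_{\rho(N)}(\xi)$, then control the surviving kernel entries by Lipschitz continuity. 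Your one-point computation checks out: writing $a=k_i(\xi,\xi)$, $b=k_i(\xi_\star,\xi)\ge a-L_{k_i}\rho(N)\ge 0$ (the sign guaranteed by the first hypothesis) and $c=k_i(\xi_\star,\xi_\star)\le a+2L_{k_i}\rho(N)$, one gets $a-b^2/c\le 4L_{k_i}\rho(N)$, so $\rho(N)\to 0$ finishes the argument; and you are right that in this noise-free setting the hypothesis $|\mathbb{B}_{\rho(N)}(\xi)|\to\infty$ is only used to make the ball eventually nonempty, whereas it becomes essential once a noise term $\sigma_{i,n}I$ is added to $\mathbf{K}_i$ (as the authors do at the end of Section~\ref{sec:main}), which is the regime \cite{Lederer19} actually treats and where your uniform-weight variant over all of $\mathbb{B}_{\rho(N)}(\xi)$ is the right argument. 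Two minor caveats are worth recording but do not affect validity: the variational identity as stated needs $\mathbf{K}_i$ positive definite (distinct training inputs; with the pseudo-inverse the same upper bound survives), and $L_{k_i}$ must be read as a Lipschitz constant of $k_i$ in each argument separately, which is how the paper and \cite{Lederer19} use it.
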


The existence of the function $\rho$ may appear to be limited in Lemma \ref{LEMMA2}, but the method of constructing the dataset $\mathcal{D}_N$ to ensure the existence of $\rho$ is described in \cite{Lederer19}.

By Lemma \ref{LEMMA2}, we can arbitrarily reduce the gap between the ideal controller $c(\xi)$ and the MR-GPR controller $\mu_{\mathcal{D}_N}(\xi)$ by using a sufficiently large and dense training input set $\mathcal{I}_{{\mathcal{D}_N}}$, such that there always exist sufficiently many training inputs
in a ball centered at the point $\xi$ with a radius which approaches to $0$ as ${N} \rightarrow \infty$. Finally, combining Lemmas \ref{LEMMA1} and \ref{LEMMA2} with Theorem \ref{THM1}, we state the following theorem without proof.

\begin{thm}
Under Assumptions  \ref{asm:contractive} and \ref{ASM:new},
suppose that the training dataset $\mathcal{D}_N$ satisfies the sufficient condition of Lemma \ref{LEMMA2}. Then, given $\epsilon>0$, there exists $\bar{N}\in\N$, such that for every $N>\Bar{N}$, the closed-loop system 
\eqref{SYSTEM:full} with the MR-GPR controller $u(t) = \mu_{\mathcal{D}_{N}} ([x(t);f_R(x(t),t)])$ sastisfies
$$\|x(t) -x_R(t)\| < \epsilon, ~~~~\forall t \ge 0$$
where $x(0) \in \XX_R$ and $x_R$ is the solution to the reference model \eqref{eq:real_closedfull} with $x_R(0)=x(0)$.
$\hfill\Box$
\end{thm}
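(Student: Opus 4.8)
The plan is to reduce the statement to Theorem~\ref{THM1} by taking the data-driven identification to be $\hat c := \mu_{\mathcal{D}_N}$ and showing that, once $N$ is large enough, $\mu_{\mathcal{D}_N}$ satisfies the uniform accuracy requirement \eqref{eq:regression_error}--\eqref{eq:M}. First I would fix a single admissible tolerance $M>0$ once and for all: viewed as a function of $M>0$, the left-hand side of \eqref{eq:M} is continuous and tends to $0$ as $M\to 0^+$, while the right-hand side $\epsilon^2$ is a fixed positive number; hence there is $M^\star>0$ such that every $M\in(0,M^\star]$ satisfies \eqref{eq:M}, and I fix one such $M$. Crucially, this $M$ depends only on the data-independent constants $L_f,\lambda_{\min},\lambda_{\max},B,\gamma,\epsilon$.

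Next I would bound the MR-GPR identification error uniformly on $\mathcal{C}$. By Lemma~\ref{LEMMA1}, $\|\mu_{\mathcal{D}_N}(\xi)-c(\xi)\|\le \|c\|_k\,\|\sigma_{\mathcal{D}_N}(\xi)\|$ for all $\xi\in\mathcal{C}$, where $\|c\|_k$ is a finite constant independent of $N$ (the RKHS norms of the fixed components $c_1,\dots,c_m$). It therefore suffices to make $\sup_{\xi\in\mathcal{C}}\|\sigma_{\mathcal{D}_N}(\xi)\|\le M/\|c\|_k$ for all large $N$. Lemma~\ref{LEMMA2}, which applies since $\mathcal{D}_N$ satisfies its hypothesis by assumption, already gives $\|\sigma_{\mathcal{D}_N}(\xi)\|\to 0$ as $N\to\infty$ for each fixed $\xi\in\mathcal{C}$.

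The main obstacle is upgrading this pointwise convergence to convergence uniform over the compact set $\mathcal{C}$, which is what \eqref{eq:regression_error} demands. I would handle it with Dini's theorem: each posterior variance $\sigma^i_{\mathcal{D}_N}$ is continuous on $\mathcal{C}$ (a continuous function of $\xi$ through the SE kernel \eqref{SEkernel}), it is nonnegative, and, because enlarging the training set never increases the posterior variance in noise-free GP regression, the map $N\mapsto \sigma^i_{\mathcal{D}_N}(\xi)$ is monotonically nonincreasing along the nested datasets $\mathcal{D}_1\subseteq\mathcal{D}_2\subseteq\cdots$; decreasing pointwise to the continuous limit $0$ on a compact set, it converges uniformly by Dini's theorem, so $\sup_{\xi\in\mathcal{C}}\|\sigma_{\mathcal{D}_N}(\xi)\|\to 0$. (If one prefers not to rely on monotonicity of the data construction, the uniform Lipschitz bounds on the posterior variance from \cite{Lederer19} combined with a standard equicontinuity (Arzel\`a--Ascoli) argument give the same conclusion.)

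Finally I would assemble the pieces: pick $\bar N\in\mathbb{N}$ so that $\|c\|_k\sup_{\xi\in\mathcal{C}}\|\sigma_{\mathcal{D}_N}(\xi)\|\le M$ whenever $N>\bar N$; then Lemma~\ref{LEMMA1} yields $\|\mu_{\mathcal{D}_N}(\xi)-c(\xi)\|\le M$ for all $\xi\in\mathcal{C}$, i.e. \eqref{eq:regression_error} holds with the $M$ fixed in the first step, which in turn satisfies \eqref{eq:M}. Applying Theorem~\ref{THM1} with $\hat c=\mu_{\mathcal{D}_N}$ then gives $\|x(t)-x_R(t)\|<\epsilon$ for all $t\ge 0$, which is the claim.
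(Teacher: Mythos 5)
Your proposal is correct and follows exactly the route the paper intends: the paper states this theorem without proof, remarking only that it follows by combining Lemmas \ref{LEMMA1} and \ref{LEMMA2} with Theorem \ref{THM1}, which is precisely your reduction via $\hat c = \mu_{\mathcal{D}_N}$ together with the choice of an admissible $M$ satisfying \eqref{eq:M}. The one place you go beyond the paper is in noticing that Lemma \ref{LEMMA2} only gives pointwise convergence of $\|\sigma_{\mathcal{D}_N}(\xi)\|$ while \eqref{eq:regression_error} requires a bound uniform over $\mathcal{C}$; your Dini-type argument (posterior variance is nonincreasing along the nested datasets $\mathcal{D}_N\subseteq\mathcal{D}_{N+1}$, each $\sigma^i_{\mathcal{D}_N}$ is continuous, and $\mathcal{C}$ is compact) correctly closes a genuine gap that the paper leaves implicit.
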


\medskip

In practice, measurement noise is unavoidable.
While GPR can effectively handle the noise on the output of a function that is to be estimated, handling input noise is still under study (see, e.g., \cite{GPRinput1,GPRinput2}).
For instance, the authors of \cite{GPRinput1} utilize a Taylor expansion approximation to obtain a corrective term in the posterior mean and variance functions, while there is no explicit error bound estimation available for the scenario.

In this paper, we simply propose a modification of covariance in \eqref{mu} and \eqref{sig}, motivated by the following example.

\begin{exam}\label{example2}
Let us consider a linear system
\begin{equation}\label{eq:linsys}
    x(t+1) = Ax(t)+Bu(t),
\end{equation}
where $x\in\R^n$, $u\in\R^m$, and $B\in\R^{n\times m}$ is left invertible.  
It is clear that from \eqref{eq:c}, the function $c$ can be found as
\begin{equation}\label{eq:linc}
c([x(t);x(t+1)])=B^\dagger (x(t+1)-Ax(t)),    
\end{equation}
where $B^\dagger$ is a left-inverse of $B$.
Suppose that we have noisy measurements
$$
\tilde{x}(t) = x(t) + w_x(t),
$$
where $w_x$ is a white noise such that $w_x(t)\sim\mathcal{N}(0,\sigma_0^2I)$ for every time step $t$.
Then from \eqref{eq:linsys} and \eqref{eq:linc}, we obtain
\begin{align*}
    u(t) &= B^\dagger (x(t+1)-Ax(t)) \\
    &=B^\dagger (\tilde{x}(t+1)-A\tilde{x}(t)) - B^\dagger(w_x(t+1)-Aw_x(t))\\
    &= c([\tilde{x}(t);\tilde{x}(t+1)]) - B^\dagger(w_x(t+1)-Aw_x(t)).
\end{align*}
Defining $\tilde{u}(t):=u(t)+w_u(t)$, where
\begin{align*}
w_u(t):=B^\dagger(w_x(t+1)-Aw_x(t)),
\end{align*}
yields
\begin{equation*}\label{eq:noise}
    c([\tilde{x}(t);\tilde{x}(t+1)]) = \tilde{u}(t).
\end{equation*}
This indicates 
that the function $c$ can be identified by using the training input 
\begin{align*}
    \tilde{\xi}(t) := [ \tilde{x}(t);\tilde{x}(t+1)] \in \mathbb{R}^{2n}
\end{align*}
which can be considered as a noise-free data
and the corresponding (noisy) training output 
$$
\tilde{u}(t)=u(t)+w_u(t)\in\mathbb{R}^m,
$$
where
$w_u(t)\sim \mathcal{N}(0, \sigma_0^2B^\dagger (I-AA^\top)(B^\dagger)^\top)$.
$\hfill\Box$
\end{exam}

Motivated by the example, we modify the covariance matrix of \eqref{mu} and \eqref{sig} as if there are output noise like:
\begin{align*}
        \mu^i_{\mathcal{D}_N}(\xi) &:= \mathbf{k}_i^T ({\xi})({\mathbf{K}}_i+{\sigma_{i,n}}I)^{-1} \mathbf{u}^i,  \\
    \sigma^i_{\mathcal{D}_N}(\xi) &:= k_i ({\xi},{\xi}) - \mathbf{k}_i^T ({\xi}) ({\mathbf{K}}_i+{\sigma_{i,n}}I)^{-1} \mathbf{k}_i ({\xi}),
\end{align*}
where ${\sigma_{i,n}}$ is the additional hyperparameter.
We demonstrate its effect in the next section.

\section{Illustrative Example}\label{sec:ex}
In this section, an illustrative example is presented to describe the utility of the proposed data-driven controller.
Consider an inverted pendulum system discretized by Euler method as
\begin{equation}\label{ex}
\begin{split}
    z_1 (t+1) &= z_2(t),\\
    z_2 (t+1) &= a(z(t)) + \frac{T^2}{ml^2}u(t),
\end{split}
\end{equation}
where $a(z) := z_2 + \frac{gT^2}{l}\sin(z_1)+(1-\frac{\mu T}{ml^2})(z_2-z_1)$, and $g$ is gravitational constant, $l$ is the distance from the base to the center of mass of the balanced body, $\mu$ is the coefficient of rotational friction, $m$ is the mass to be balanced, and $T$ is a sampling period.
Following Example \ref{exam:1}, it is obvious that
\begin{equation}\label{eq:c_examp}
c([z(t);z(t+1)]) = \frac{ml^2}{T^2}(z_2(t+1) - a(z(t)) )
\end{equation}
by \eqref{eq:affine_c}.
With a given reference signal $\phi (t)$, the reference model $f_R$ is chosen following \eqref{eq:f_R}, namely,
$$
f_R(z(t),t)= 
\begin{bmatrix}
    z_2(t) \\ \phi(t+2)
\end{bmatrix}
$$
and it also satisfies Assumption \ref{asm:contractive}.
Now, we design MR-GPR controller
\begin{align*}
    u(t) = \mu_{\mathcal{D}_N} ([z(t);f_R(z(t),t)]),
\end{align*}
which is expected to make $z_1(t)$ track the signal $\phi(t)$ under system parameters $g=9.8, m=l=0.2,\mu=0.01$, and sampling period $T=0.1$.

Let us collect the data for identifying function $c$ in \eqref{eq:c_examp}. Instead of using only one trajectory of input and state as \eqref{DATASET} and \eqref{TD}, we use the following training dataset
$$
\mathcal{D}_N = \left\{ (\xi_\dd^i(0), u^i_\dd (0) )
    \right\}_{i=1}^{N-1},
$$
where
\begin{align*}
    \xi_\dd^i(0) = [ {z_\dd}^i(0);{z_\dd}^i(1)]
\end{align*}
is collected by the $i$-th experiment with a random initial condition ${z_{1,\dd}}^i(0)\in[-\pi,\pi]$, ${z_{2,\dd}}^i(0) = 0,$ and a random input $ u^i_\dd (0)\in[-1,1]$ for $i=1,\ldots,N-1$.
In this simulation, we adopted an alternative definition for $\mathcal{D}_N$ which differs from the definition in \eqref{TD}, but can be justified by the approach presented in \cite[Remark 1]{HHH22}. The hyperparameters of the SE kernel in \eqref{SEkernel} were determined by optimizing the marginal likelihood through the GPML toolbox \cite{gpmltoolbox} using the training data ${\mathcal{D}_N}$. 

The performance of the MR-GPR controller is compared to that of the ideal controller in the closed-loop system with two different initial conditions (in all cases, we set the initial conditions of each system to $(z_1(0),z_2(0))\in\{(-\pi,0), (\pi/2,0)\}$), where the MR-GPR controller is trained using $N=300$ experiments. As shown in Fig.~\ref{FIG:toy_epi300}, the performance of the MR-GPR controller pretty well follows that of the ideal controller.
Additionally, the MR-GPR controller is trained with state data in the presence of additive white Gaussian noise with a signal-to-noise ratio of 20 dB for $N=3000$ experiments. The result is illustrated in Fig.~\ref{FIG:toy_epi300_input_noise}.
The figure reveals that the control performance is less-effective under measurement noise on the state (which plays the role of input noise for GPR). Therefore, we implement the method discussed at the end of the previous section.
By setting the hyperparameter $\sigma_{i,n} = 0.5,$ the result is shown in Fig.~\ref{FIG:toy_epi300_output_noise}. 
It is observed that, in the presence of noise in the state, the proposed MR-GPR controller obtains better result.

\begin{figure}[t]
\centering
\includegraphics[width=\columnwidth]{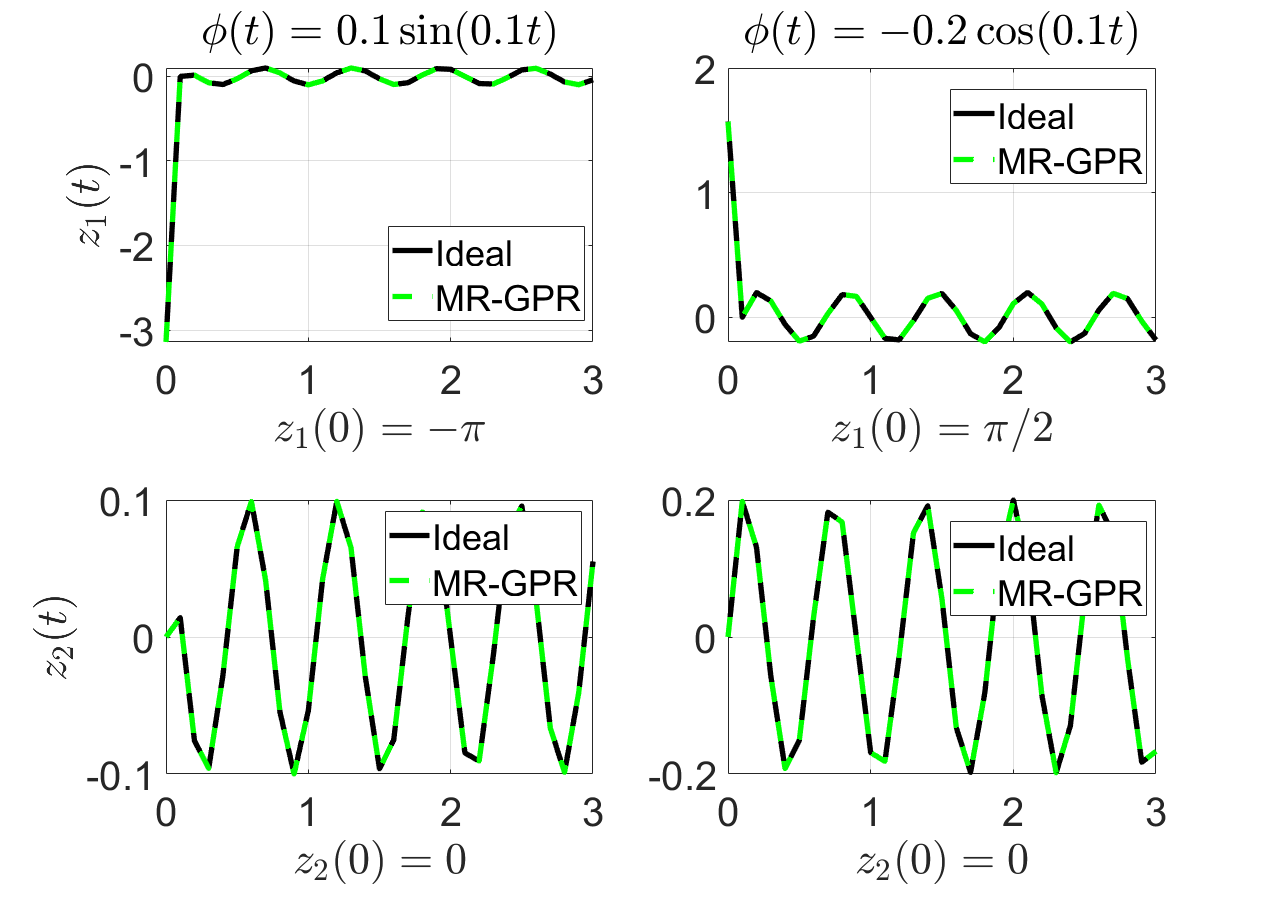}
\caption{State trajectories of the system \eqref{ex} with ideal 
controller $c$ (black line) and MR-GPR controller $\mu_{\mathcal{D}_N}$ (green dashed line) designed by the data of $N=300$ experiments from different initial conditions and different $\phi(t)$.\label{FIG:toy_epi300}}
\end{figure}
\begin{figure}[t]
\centering
\includegraphics[width=\columnwidth]{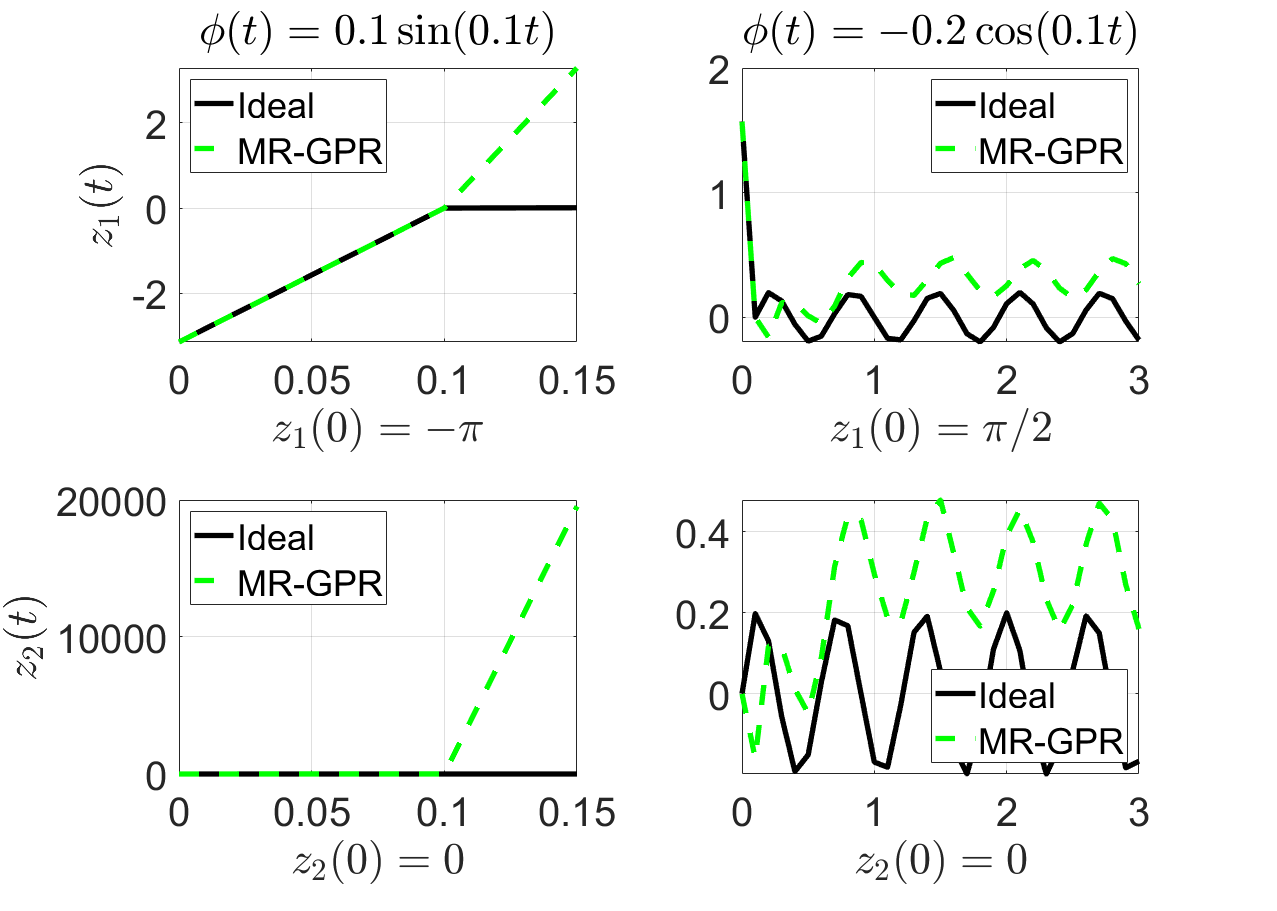}
\caption{State trajectories of the system \eqref{ex} with ideal 
controller $c$ (black line) and MR-GPR controller $\mu_{\mathcal{D}_N}$ (green dashed line) designed by the data of $N=3000$ experiments under the state noise in the state $z(t)$ from different initial conditions. We only draw $0.15s$ for the left figures because they show unstable system. \label{FIG:toy_epi300_input_noise}}
\end{figure}
\begin{figure}[t]
\centering
\includegraphics[width=\columnwidth]{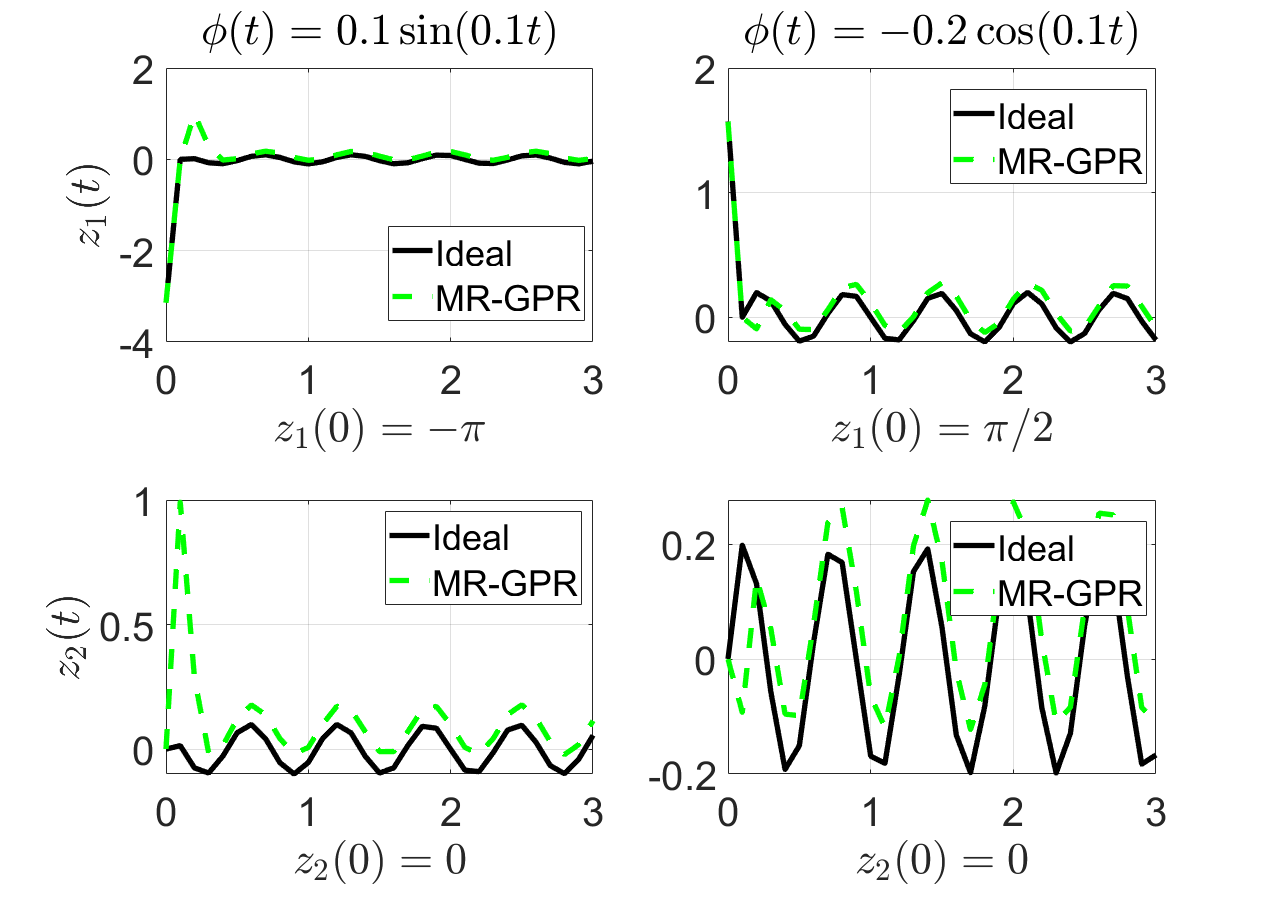}
\caption{State trajectories of the system \eqref{ex} with ideal 
controller $c$ (black line) and MR-GPR controller $\mu_{\mathcal{D}_N}$ (green dashed line) designed by the data of $N=3000$ experiments under the state noise treating it as input noise.\label{FIG:toy_epi300_output_noise}}
\end{figure}

\section{Conclusion}\label{sec:conc}

In this study, we introduced a data-driven state feedback controller, referred to as the MR-GPR controller, for nonlinear discrete-time systems under some assumptions. The proposed controller was designed based on the GPR, trained solely on the state/input data of the system, without requiring any prior knowledge of the system's underlying physics or mathematical model. Additionally, the usefulness of the proposed controller was verified through numerical experiments.


\end{document}